\theoremstyle{plain} 
\newtheorem{thm}{Theorem}[section]
\newtheorem{prop}[thm]{Proposition}
\newtheorem{lem}[thm]{Lemma}
\theoremstyle{definition}
\newtheorem{definition}[thm]{Definition}
\theoremstyle{remark}
\newtheorem{rem}[thm]{Remark}
\newcommand{\To}{\Rightarrow}
\newcommand{\id}{\mathit{id}} 
\newcommand{\Sig}{\mathrm{Sig}} 
\newcommand{\Ss}{\Sigma} 
\newcommand{\Tt}{\Theta} 
\newcommand{\logL}{\mathcal{L}} 
\newcommand{\pu}{\mathrm{pure}} 	
\newcommand{\ex}{\mathrm{exc}} 	
\newcommand{\pure}{{(0)}} 
\newcommand{\ppg}{{(1)}} 
\newcommand{\ctc}{{(2)}} 
\newcommand{\eqs}{\equiv} 
\newcommand{\eqw}{\sim} 
\newcommand{\empt}{\mathbb{0}}  
\newcommand{\Exc}{\mathit{Exc}} 
\newcommand{\inn}{\mathit{normal}} 
\newcommand{\ina}{\mathit{abrupt}} 
\newcommand{\cotuple}[1]{\left[ #1 \right]}
\newcommand{\bigcotuple}[1]{\bigl[ #1 \bigr]}
\newcommand{\Bigcotuple}[1]{\Bigl[ #1 \Bigr]}
\newcommand{\cotu}{\cotuple{\,}}
\newcommand{\toppg}[1]{\triangledown #1 } 
\newcommand{\bigtoppg}[1]{\bigtriangledown #1 } 
\newcommand{\throw}[2]{\mathit{throw}_{#2}\,E_{#1}\,} 
\newcommand{\try}[2]{\mathit{try}\{#1\}\,#2}  
\newcommand{\catch}[2]{\mathit{catch}\,\{E_{#1}\To #2\}}  
\newcommand{\catchn}[4]{\mathit{catch}\,\{E_{#1}\!\To\! #2|\dots|E_{#3}\!\To\! #4\}}  
\newcommand{\catchij}[4]{\mathit{catch}\,\{E_{#1}\!\To\! #2\;|\;E_{#3}\!\To\! #4\}}  
\title{Decorated proofs for computational effects: Exceptions} 
\date{March 13., 2012}
\author{
  Jean-Guillaume Dumas\thanks{
    LJK, Universit\'e de Grenoble, France. \url{Jean-Guillaume.Dumas@imag.fr}.
  This work is partly funded by the project HPAC 
  of the French Agence Nationale de la Recherche (ANR 11 BS02 013).}, 
  Dominique Duval\thanks{
    LJK, Universit\'e de Grenoble, France. \url{Dominique.Duval@imag.fr}.
  This work is partly funded by the project CLIMT 
   of the French Agence Nationale de la Recherche (ANR 11 BS02 016).}, 
  Laurent Fousse\thanks{
    LJK, Universit\'e de Grenoble, France. \url{Laurent.Fousse@imag.fr}}, 
  Jean-Claude Reynaud\thanks{
    Malhivert, Claix, France. \url{Jean-Claude.Reynaud@imag.fr}}.
}
\begin{document}

\maketitle

\begin{abstract} 
We define a proof system for exceptions which is close to 
the syntax for exceptions, in the sense that the exceptions 
do not appear explicitly in the type of any expression. 
This proof system is sound with respect to the intended 
denotational semantics of exceptions. With this inference system 
we prove several properties of exceptions.
\textbf{Keywords.}
Computational effects.
Semantics of exceptions.
Proof system.
\end{abstract}

\section*{Introduction}

In this paper, as in the apparented papers 
\cite{DD10-dialog,DDR11-seqprod,DDFR12b-state}, 
we consider that a \emph{computational effect} in a language 
corresponds to an apparent lack of soundness: 
the intended denotational semantics is not a model of the syntax, 
but it becomes so when the syntax is endowed with relevant \emph{decorations}; 
more precisely, a proof system can be designed for dealing 
with these decorations, which is sound with respect to 
the intended denotational semantics. 
In \cite{DDFR12b-state} this point of view has been applied to the 
side-effects due to the evolution of the \emph{states} of the memory
in an imperative or object-oriented language.
In this paper, it is applied to the effects caused by \emph{exceptions}. 
It happens that there is a duality between the denotational semantics of states
and the \emph{core part} of the semantics of exceptions \cite{DDFR12a-short}. 
The encapsulation of the core part inside the mechanism of exceptions is 
a succession of case distinctions;
the proof system is extended for dealing with it. 
Properties of exceptions can be proved using this inference system 
and the proofs can be simplified by re-using proofs on states, 
thanks to the duality.

To our knowledge, the first categorical treatment of 
computational effects is due to Moggi \cite{Mo91};
this approach relies on monads, it is implemented 
in the programming language Haskell \cite{Wa92,haskell}. 
Although monads are not used in this paper, 
the basic ideas underlying our approach rely on Moggi's remarks 
about notions of computations and monads. 
The examples proposed by Moggi include 
the exceptions monad $T A = A + E$ where $E$ is the set of exceptions. 
Later on, using the correspondence between monads
and algebraic theories, Plotkin and Power proposed to 
use Lawvere theories for dealing with 
the operations and equations related to computational effects  
\cite{PP02,HP07};  
an operation is called \emph{algebraic} 
when it satisfies some relevant genericity properties. 
The operation for raising exceptions is algebraic, 
while the operation for handling exceptions is not \cite{PP03}. 
It follows that the handling of exceptions is quite difficult 
to formalize in this framework;
several solutions are proposed in \cite{SM04,Le06,PP09}.
In this paper we rather use the categorical approach 
of \emph{diagrammatic logics}, as introduced in \cite{D03-diaspec}
and developed in \cite{DD10-dialog}. 

In Section~\ref{sec:exc} a denotational semantics for exceptions is 
defined, where we dissociate the core operations from their encapsulation. 
Then a decorated proof system and a decorated specification 
for exceptions 
are defined in Section~\ref{sec:deco} and it is checked that 
the denotational semantics for exceptions can be seen as a model 
of this specification. 
In Section~\ref{sec:proof} we use this framework for
proving some properties of exceptions. 

\section{Denotational semantics for exceptions}
\label{sec:exc}

In this Section we define a denotational semantics 
of exceptions which relies on the semantics 
of exceptions in various languages, for instance 
in Java \cite{java} and ML \cite{sml}. 
Syntax is introduced in Section~\ref{subsec:exc-syntax} 
and the distinction between ordinary and exceptional values is 
discussed in Section~\ref{subsec:exc-sem}.  
Denotational semantics of raising and handling exceptions are 
considered in Sections~\ref{subsec:exc-tag} and~\ref{subsec:exc-untag}, 
respectively. 

\subsection{Signature for exceptions}
\label{subsec:exc-syntax}

The syntax for exceptions in computer languages depends on the language:  
the keywords for raising exceptions may be either 
\texttt{raise} or \texttt{throw}, 
and for handling exceptions they may be either 
\texttt{handle} or \texttt{try-catch}, for instance. 
In this paper we rather use \texttt{throw} and \texttt{try-catch}, 
but this choice does not matter. 
More precisely, 
the syntax of our language may be described in two parts: 
a \emph{pure} part and an \emph{exceptional} part. 
The pure part is a signature $\Sig_{\pu}$, 
made of types and operations; 
the $\Sig_{\pu}$-expressions are called the \emph{pure expressions}.  
The interpretation of the pure expressions should 
neither raise nor handle exceptions. 
We assume that the pure operations are either constants or unary.
General $n$-ary operations would require the use of 
sequential products, as in \cite{DDR11-seqprod}; 
in order to focus on the fundamental properties of exceptions 
they are not considered in this paper. 
The exceptional part is made of a symbol $E_i$
for each index $i$ in some set of indices $I$, 
which is declared as:
  $ \mathit{Exception}\; E_i \;\mathit{of}\; P_i $, 
where $P_i$ is a pure type  called the \emph{type of parameters} 
for the \emph{exceptional type} $E_i$ (the $P_i$'s need not be distinct). 
The exceptional types $E_i$ provide familiar notations 
for the raising and handling operations  
and in Section~\ref{subsec:exc-tag} they are interpreted as sets,
however  we will not define any expression of type $E_i$.

Let us assume that the signature $\Sig_{\pu}$ is fixed. 
The \emph{expressions} of our language are defined recursively from 
the pure operations and from the \emph{raising}
and \emph{handling} operations, as follows. 

\begin{definition}
\label{defi:exc-sig}
Given a set of indices $I$ and a symbol $E_i$ for each $i\in I$, 
the \emph{signature for exceptions} $\Sig_{\ex}$ is made of $\Sig_{\pu}$
together with  
a \emph{raising} operation
for each $i$ in $I$ and each type $Y$ in $\Sig_{\pu}$:
  $$ \throw{i}{Y} :P_i\to Y \;.$$ 
and a \emph{handling} operation 
for each $\Sig_{\ex}$-expression $f:X\to Y$, 
each non-empty list of indices $(i_1,\dots,i_n)$ 
and each $\Sig_{\ex}$-expressions $g_1:P_{i_1}\to Y$, \dots, $g_n:P_{i_n}\to Y$: 
  $$ \try{f}{\catchn{i_1}{g_1}{i_n}{g_n}} : X \to Y \;.$$ 
\end{definition}

\subsection{Ordinary values and exceptional values} 
\label{subsec:exc-sem}

The syntax for exceptions defined in Section~\ref{subsec:exc-syntax} 
is now interpreted in the category of sets. 
In order to express the denotational semantics of exceptions,
a major point is that there are two kinds of values:
the ordinary (or non-exceptional) values and the exceptions.
It follows that the operations may be classified 
according to the way they may, or may not, interchange 
these two kinds of values: 
an ordinary value may be \emph{tagged} for constructing an exception, 
and later on the tag may be cleared in order to recover the value.
Then we say that the exception gets \emph{untagged}. 
Let us introduce a set $\Exc$ called the \emph{set of exceptions}.
For each set $X$ we consider the disjoint union $X+\Exc$ 
with the inclusions $\inn_X: X \to X+\Exc$ and $\ina_X: \Exc \to X+\Exc$.

\begin{definition}
\label{defi:exc-fcts}
For each set $X$, an element of $X+\Exc$ is 
an \emph{ordinary value} if it is in $\inn_X(X)$ 
and an \emph{exceptional value} if it is in $\ina_X(\Exc)$.
A function $f:X+\Exc \to Y+\Exc$ is said to  
\emph{raise an exception} 
if there is an element $x\in X$ such that $f(x)\in\Exc$;
\emph{propagate exceptions} 
if $f(\ina_X(e))=\ina_Y(e)$ for every $e\in\Exc$;  
\emph{recover from an exception} 
if there is some $e\in\Exc$ such that $f(e)\in Y$.
\end{definition}

We will use the same notations for the syntax 
and for its interpretation. 
Each type $X$ is interpreted as a set~$X$.
Each pure expression $f_0:X\to Y$ is interpreted as a function $f_0:X\to Y$, 
which can be extended as $f=\inn_Y\circ f_0:X \to Y+\Exc$.
When $f:X\to Y$ is a $\Sig_{\ex}$-expression, which 
may involve some raising or handling operation, 
its interpretation is a function $f:X \to Y+\Exc$
which is defined in the next Sections~\ref{subsec:exc-tag} 
and~\ref{subsec:exc-untag}. 
In addition, every function $f:X \to Y+\Exc$ 
can be extended as $\cotuple{f\,|\,\ina_Y} :X+\Exc \to Y+\Exc$,
which is defined by the equalities 
$ \cotuple{f\,|\,\ina_Y} \circ \inn_X  = f$ and 
$ \cotuple{f\,|\,\ina_Y} \circ \ina_X  = \ina_Y$.
This is the unique extension of $f$ to $X+\Exc$ which propagates exceptions. 

\begin{rem}
\label{rem:ppg}  
\emph{The interpretation of a $\Sig_{\ex}$-expression $f:X\to Y$
is a function which propagates exceptions; 
this function may raise exceptions but it cannot 
recover from an exception.} 
In Section~\ref{subsec:exc-untag}, in order to catch exceptions, 
we will introduce functions which recover from exceptions. 
However such a function cannot be the interpretation of 
any $\Sig_{\ex}$-expression. 
Indeed, a \emph{try-catch} expression may recover from exceptions  
which are raised inside the \emph{try} block, 
but if an exception is raised before the \emph{try-catch} expression
is evaluated, this exception is propagated. 
Recovering from an exception can only be done by functions 
which are not expressible in the language generated by $\Sig_{\ex}$: 
such functions are called the \emph{untagging} functions,
they are defined in Section~\ref{subsec:exc-untag}.
Together with the \emph{tagging} functions 
defined in Section~\ref{subsec:exc-tag} they are called 
the \emph{core} functions for exceptions.
\end{rem}

\subsection{Tagging and raising exceptions: \textit{throw}}
\label{subsec:exc-tag}

Raising an exception is based on a tagging process,
modelled as follows. 

\begin{definition}
\label{defi:exc-tag}
For each index $i\in I$ there is an injective function
  $t_i:P_i\to\Exc$, 
called the \emph{exception constructor} or the \emph{tagging} function 
of index $i$, 
and the tagging functions for distinct indices have disjoint images. 
The image of $t_i$ in $\Exc$ is denoted $E_i$. 
\end{definition}

Thus, the tagging function $t_i:P_i\to\Exc$
maps a non-exceptional value (or \emph{parameter}) $a\in P_i$ 
to an exception $t_i(a)\in\Exc$. 
This means that the non-exceptional value $a$ in $P_i$ gets 
tagged as an exception $t_i(a)$ in $\Exc$. 
The disjoint union of the $E_i$'s is a subset of $\Exc$; 
for simplicity we assume that 
  $ \Exc = \sum_{i\in I} E_i \;.$ 

\begin{definition}
\label{defi:exc-raise}
For each index $i\in I$ and each set $Y$, the \emph{throwing} 
or \emph{raising} function $ \throw{i}{Y} $ 
is the tagging function $t_i$ followed by 
the inclusion of $\Exc$ in $Y+\Exc$: 
  $ \throw{i}{Y} = \ina_Y \circ t_i : P_i\to Y+\Exc \;.$ 
\end{definition}

\subsection{Untagging and handling exceptions: \textit{try-catch} }
\label{subsec:exc-untag}

Handling an exception is based on an untagging process 
for clearing the exception tags, which is modelled as follows. 

\begin{definition}
\label{defi:exc-untag}
For each index $i\in I$ there is a function
  $c_i:\Exc\to P_i+\Exc$,   
called the \emph{exception recovery} or the \emph{untagging} function
of index $i$, 
which satisfies: 
$ \forall a\in P_i\; c_i(t_i(a)) = a \mbox{ and }  
  \forall b\in P_j\; c_i(t_j(b)) = t_j(b) \mbox{ for each } j\ne i \;.$
\end{definition}

Thus, for each $e\in\Exc$ 
the untagging function $c_i(e)$ 
tests whether the given exception $e$ is in $E_i$; 
if this is the case, 
then it returns the parameter $a\in P_i$ such that $e=t_i(a)$, 
otherwise it propagates the exception~$e$. 
Since it has been assumed that $\Exc=\sum_{j\in I} E_j$, 
the untagging function $c_i(e)$ is uniquely determined by 
the above equalities.
 
For handling exceptions of type $E_i$ raised by some function $f:X\to Y+\Exc$, 
for $i$ in a non-empty list $(i_1,\dots,i_n)$ of indices,
one provides for each $k$ in $\{1,\dots,n\}$ 
a function $g_k:P_{i_k}\to Y+\Exc$ 
(which thus may itself raise exceptions). 
Then the handling process builds a function 
which encapsulates some untagging functions 
and which propagates exceptions. 
The indices $i_1,\dots,i_n$ form a list: they are given in this order 
and they need not be pairwise distinct. 
It is assumed that this list is non-empty,
because it is the usual choice in programming languages,
however it would be easy to drop this assumption. 

\begin{definition}
\label{defi:exc-handle}
For each function $f:X\to Y+\Exc$,
each non-empty list $(i_1,\dots,i_n)$ of indices in $I$ 
and each family of functions $g_k:P_{i_k}\to Y+\Exc$ 
(for $k\in\{1,\dots,n\}$), 
the \emph{handling} function 
  $$ \try{f}{\catchn{i_1}{g_1}{i_n}{g_n}} : X\to Y+\Exc $$  
is defined as follows.
Let $h=\try{f}{\catchn{i_1}{g_1}{i_n}{g_n}}$, for short. 
For each $x\in X$, $\;h(x) \in Y+\Exc$ is defined in the following way.
\begin{itemize}
\item[] \textit{First $f(x)$ is computed: }
\\ let $y=f(x) \in Y+\Exc$. 
  \begin{enumerate}[(1)]
  \item \textit{If $y$ is not an exception, then it is the required result: } 
  \\ if $y\in Y$ then $h(x) = y\in Y \subseteq Y+\Exc$.
  \item \textit{If $y$ is an exception, then: } 
    \begin{enumerate}[(a)]
    \item \textit{If the type of $y$ is $E_i$ for some index $i$ 
    in $(i_1,\dots,i_n)$, then $y$ has to be caught 
    according to the first occurrence of the index $i$ in the list: } 
    \\ for each $k=1,\dots,n$,
      \begin{itemize}
      \item \textit{Check whether the exception $y$ has type $E_{i_k}$: } 
      \\ let $z = c_{i_k}(y) \in P_{i_k}+\Exc$. 
      \item \textit{If the exception $y$ has type $E_{i_k}$ then it is caught: }
      \\ if $z\in P_{i_k}$ then $h(x) = g_k(z)\in Y+\Exc$.
      \end{itemize}
    \item \textit{If the type of $y$ is $E_i$ for some $i\not\in
    \{i_1,\dots,i_n\}$, then $y$ is propagated: } 
    \\ otherwise $h(x) = y\in \Exc \subseteq Y+\Exc$.
    \end{enumerate}
  \end{enumerate}
\end{itemize}
\end{definition}

Equivalently, the definition of $h= \try{f}{\catchn{i_1}{g_1}{i_n}{g_n}} $ 
can be expressed as follows. 
\begin{description}
\item[(1-2)] The function $h:X\to Y+\Exc$ is defined 
from $f$ and from a function 
$\catchn{i_1}{g_1}{i_n}{g_n} : \Exc\to Y+\Exc$ by: 
  \begin{equation}
  \label{diag:handle-explicit-intermediate} 
  h \;=\; 
    \Bigcotuple{\; \inn_Y \;|\; \catchn{i_1}{g_1}{i_n}{g_n} \;} \circ f 
  \end{equation} 
  $$ \xymatrix@C=3pc@R=1pc{
  & Y \ar[d]_{\inn} \ar[rrrrd]^{\inn} &&&& \\
  X \ar[r]^{f} & 
    Y+\Exc \ar[rrrr]^(.4){h} &
    \ar@{}[ul]|(.4){=} \ar@{}[dl]|(.4){=} &&& Y+\Exc \\ 
  & \Exc \ar[u]^{\ina} \ar[rrrru]_{\catchn{i_1}{g_1}{i_n}{g_n}} &&&& \\
  } $$
\begin{description}
\item[(a-b)] The function $\catchn{i_1}{g_1}{i_n}{g_n}$ 
is obtained by setting $p=1$ in the family of functions 
$k_p = \catchn{i_p}{g_p}{i_n}{g_n} : \Exc\to Y+\Exc$
(for $p=1,\dots,n$) which are defined recursively by: 
  \begin{equation}
  \label{diag:handle-explicit-catch} 
  k_p \;=\;  
    \begin{cases} 
     \bigcotuple{\; g_n \;|\; \ina_Y \;} \circ c_{i_n} & 
		   \mbox{ when } p=n \\
     \bigcotuple{\; g_p \;|\; k_{p+1} \;}  \circ c_{i_p} & 
		   \mbox{ when } p<n \\
   \end{cases} 
  \end{equation}
Let $k_{n+1}=\ina_Y$, then
 $ k_p \;=\; \cotuple{\; g_p \;|\; k_{p+1} \;} \circ c_{i_p}
      \;\;  \mbox{ for each } p\leq n $.
 $$ \xymatrix@C=3pc@R=1pc{
  & P_{i_p} \ar[d]_{\inn} \ar[rrrrd]^{g_p} &&&& \\
  \Exc \ar[r]^(.4){c_{i_p}} & 
    P_{i_p}+\Exc 
      \ar[rrrr]^(.4){\cotuple{g_p|k_{p+1}}} &
    \ar@{}[ul]|(.4){=} \ar@{}[dl]|(.4){=} &&& Y+\Exc \\ 
  & \Exc \ar[u]^{\ina} \ar[rrrru]_{k_{p+1}} &&&& \\
  }  $$
\end{description}
\end{description}
When $n=1$ we get $ 
\try{f}{\catch{i}{g}} =  
  \Bigcotuple{\inn_Y |
       \bigcotuple{g| \ina_Y  
               } \circ c_i 
            } \circ f  \;.$

\begin{rem}
\label{rem:cases}
The handling process involves several nested case distinctions.
Since it propagates exceptions, there is a first case distinction for 
checking whether the argument $x$ is an exception (which is simply propagated) 
or not. If $x$ is not an exception, then there is 
a case distinction (1-2) for checking whether $f(x)$ is an exception or not.
If $f(x)$ is an exception 
then each step (a-b) checks whether the result of the 
untagging function is an exception.
All these case distinctions 
check whether some value is an exception or not,
they rely on disjoint unions of the form $T+\Exc$. 
In contrast, for each step (a-b) 
there is another case distinction encapsulated in the computation of 
the untagging function, which checks whether the exception 
has the required exception type and relies on the disjoint union 
$\Exc=\sum_i E_i$.  
\end{rem}

\section{Decorated logic for exceptions}
\label{sec:deco}

In Section~\ref{sec:exc} we have introduced 
a signature $\Sig_{\ex}$ and a denotational semantics for exceptions.
However the soundness property is not satisfied:
the denotational semantics is not a model of the signature,
in the usual sense, 
since an expression $f:X \to Y$ is interpreted as a function $f:X \to Y+\Exc$  
instead of $f:X\to Y$. 
Therefore, in this Section we build a 
\emph{decorated specification} for exceptions,
including a ``decorated'' signature and ``decorated'' equations, 
which is sound with respect to the denotational semantics 
of Section~\ref{sec:exc}. 
For this purpose, first we form an equational specification by 
extending the signature $\Sig_{\ex}$ with operations $t_i$ and $c_i$ 
and equations involving them,  
in order to formalize the tagging and untagging functions 
of Sections~\ref{subsec:exc-tag} and~\ref{subsec:exc-untag}. 
Then we add \emph{decorations} to this specification, 
and we define the interpretation of the 
expressions and equations according to their decorations. 
This means that we have to extend the equational logic with 
a notion of decoration; the decorations and  
the decorated inference rules
are given in Section~\ref{subsec:deco-deco}.
In Section~\ref{subsec:deco-exc} we define the 
decorated specification for exceptions 
and in  Section~\ref{subsec:deco-model} we check 
that this decorated specification
is sound with
respect to the denotational semantics of Section~\ref{sec:exc}.
In the decorated specification for exceptions, 
there are on one side 
\emph{private} operations for tagging and untagging exceptions, 
which do not appear in the signature for exceptions $\Sig_{\ex}$,
and on the other side  
\emph{public} operations for raising and handling exceptions, 
which are defined using the private operations.  
According to remark~\ref{rem:ppg}, 
an important feature of exceptions is that 
\emph{all public operations propagate exceptions},
such operations will be called \emph{propagators}; 
operations for recovering from exceptions 
may appear only as private operations, 
which will be called \emph{catchers}.

\subsection{Decorations}
\label{subsec:deco-deco}

In order to deal with exceptions 
we define three decorations for expressions.
They are denoted by $\pure$, $\ppg$ and $\ctc$ used as superscripts,
and their meaning is described in an informal way as follows. 
\begin{itemize}
\item The interpretation of a \emph{pure} expression $f^\pure$ 
may neither raise exceptions nor recover form exceptions. 
\item The interpretation of a \emph{propagator} $f^\ppg$ 
may raise exceptions but it is not allowed to 
recover from exceptions; thus, it must propagate all exceptions. 
\item  The interpretation of a \emph{catcher} $f^\ctc$ 
may raise exceptions and recover form exceptions. 
\end{itemize}
Every pure expression can be seen as a propagator 
and every propagator as a catcher.  
It follows that every expression can be seen as a catcher, 
so that the decoration $\ctc$ could be avoided;  
however we often use it for clarity. 

In addition, we define two decorations for equations. 
They are denoted by two distinct relational symbols $\eqs$ 
for \emph{strong} equations and by $\eqw$ for  \emph{weak} equations. 
Using the fact that every expression can be seen as a catcher, 
their meaning can be described as follows. 
\begin{itemize}
\item A \emph{strong} equation $f^\ctc\eqs g^\ctc$ 
is interpreted as an equality of the functions $f$ and $g$
both on ordinary and on exceptional values. 
\item A \emph{weak} equation  $f^\ctc\eqw g^\ctc$ 
is interpreted as an equality of the functions $f$ and $g$
on ordinary values, but $f$ and $g$ may differ on exceptional values. 
\end{itemize}
Clearly every strong equation $f\eqs g$ gives rise 
to the weak equation $f\eqw g$. 
On the other hand, since propagators 
cannot modifiy the exceptional values, 
every weak equation between propagators 
can be seen as a strong equation, 
and a similar remark holds for pure expressions.

\begin{rem}
\label{rem:toppg}
It follows from these descriptions that 
every catcher $k$ gives rise to a propagator $\toppg{k}$ 
with a weak equation $k\eqw \toppg{k}$: 
this propagator $\toppg{k}$ has the same interpretation as $k$ 
on the non-exceptional values and it is interpreted as 
the identity on the exceptional values. 
\end{rem}

In  the short note \cite{DDFR12a-short} it is checked that,
from a denotational point of view, 
the functions for tagging and untagging exceptions are respectively 
\emph{dual},
in the categorical sense, 
to the functions for looking up and updating states.  
It happens that this duality also holds from the decorated 
point of view. 
Thus, most of the decorated rules for exceptions
are dual to the decorated rules for states \cite{DDFR12b-state}. 
The decorated rules for exceptions are given here 
in three parts (Figures~\ref{fig:proof-rules-one}, 
\ref{fig:proof-rules-two} and~\ref{fig:proof-rules-three}). 
For readability, the decoration properties are often grouped 
with other properties: for instance, ``$f^\ppg \eqw g^\ppg$''
means ``$f^\ppg$ and $g^\ppg$ and $f \eqw g$''. 

The rules in Figure~\ref{fig:proof-rules-one} may be called 
the rules for the 
\emph{decorated monadic equational logic} for exceptions.   
The unique difference between 
these rules and the dual rules for states 
lies in the congruence rules for the weak equations:
for states the replacement rule is restricted to pure $g$'s,
while for exceptions it is 
the substitution rule which is restricted to pure $f$'s. 

\begin{figure}[!ht]
\renewcommand{\arraystretch}{2.3}  
$$ \begin{array}{|c|} 
\hline 
\dfrac{X}{\id_X:X\to X } \qquad 
\dfrac{X}{\id_X^\pure } \qquad 
\dfrac{f^\pure}{f^\ppg} \\
\dfrac{f:X\to Y \quad g:Y\to Z}{g\circ f:X\to Z}  \qquad
\dfrac{f^\pure \quad g^\pure}{(g\circ f)^\pure}  \qquad 
\dfrac{f^\ppg \quad g^\ppg}{(g\circ f)^\ppg} \\  
\dfrac{f:X\to Y}{f\circ \id_X \eqs f} \qquad 
\dfrac{f:X\to Y}{\id_Y\circ f \eqs f} \qquad
\dfrac{f:X\to Y \quad g:Y\to Z \quad h:Z\to W}
  {h\circ (g\circ f) \eqs (h\circ g)\circ f} \\
\dfrac{f^\ppg \eqw g^\ppg}{f \eqs g} \qquad
\dfrac{f \eqs g}{f \eqw g} \\ 
\dfrac{}{f \eqs f} \qquad 
\dfrac{f \eqs g}{g \eqs f} \qquad 
\dfrac{f \eqs g \quad g \eqs h}{f \eqs h} \\ 
\dfrac{}{f \eqw f} \qquad 
\dfrac{f \eqw g}{g \eqw f} \qquad 
\dfrac{f \eqw g \quad g \eqw h}{f \eqw h} \\ 
\dfrac{f:X\to Y \quad g_1\eqs g_2:Y\to Z}
  {g_1\circ f \eqs g_2\circ f }  \qquad 
\dfrac{f_1\eqs f_2:X\to Y \quad g:Y\to Z}
  {g\circ f_1 \eqs g\circ f_2 } \\
\dfrac{f^\pure:X\to Y \quad g_1\eqw g_2:Y\to Z}
  {g_1\circ f \eqw g_2\circ f }  \qquad 
\dfrac{f_1\eqw f_2:X\to Y \quad g:Y\to Z}
  {g\circ f_1 \eqw g\circ f_2 } \\
\hline 
\end{array}$$
\renewcommand{\arraystretch}{1}
\caption{Decorated rules for exceptions (1)} 
\label{fig:proof-rules-one} 
\end{figure}


Several kinds of decorated coproducts are used for dealing with exceptions.
The rules in Figure~\ref{fig:proof-rules-two} are the 
rules for a decorated initial type~$\empt$, also called an \emph{empty type},  
and for a \emph{constitutive coproduct}, as defined below. 
These rules are dual to the rules for the decorated final type 
and for the observational product for states 
in \cite{DDFR12b-state}. 

\begin{definition}
\label{defi:initial}
A \emph{decorated initial type} for exceptions 
is a type $\empt$ such that  
for every type $X$ 
there is a pure expression $\cotu_X:\empt\to X$ 
such that every function from $\empt$ to $X$ is weakly equivalent to $\cotu_X$. 
\end{definition}

It follows that every pure expression and every propagator 
from $\empt$ to $X$ is strongly equivalent to $\cotu_X$.  

\begin{definition}
\label{defi:constitutive-coproduct} 
A \emph{constitutive coproduct} for exceptions 
is a family of propagators 
$(q_i:X_i\to X)_i$ such that 
for every family of propagators $(f_i:X_i\to Y)_i$ 
there is a catcher $f=\cotuple{f_i}_i:X\to Y$,
unique up to strong equations,  
such that $f \circ q_i \eqw f_i $ for each $i$. 
\end{definition}

This definition means that a constitutive coproduct 
can be used for building a catcher from several propagators; 
this corresponds to the fact that the set $\Exc$ is 
the disjoint union of the $E_i$'s. 

\begin{figure}[!ht]
\renewcommand{\arraystretch}{2.3}  
$$ \begin{array}{|c|} 
\hline 
\multicolumn{1}{|l|}
  {\qquad \text{When $\empt$ is a \emph{decorated initial type:}} } \\ 
\dfrac{X}{\cotu_X:\empt \to X} \qquad  
\dfrac{X}{\cotu_X^\pure} \qquad 
\dfrac{f:\empt \to Y}{f \eqw \cotu_Y} \\
\multicolumn{1}{|l|}
  {\qquad 
  \text{When $(q_i^\ppg:X_i\to X)_i$ is a \emph{constitutive coproduct:}} 
  \qquad } \\ 
\dfrac {(f_i^\ppg : X_i\to Y)_i} 
  {\cotuple{f_i}_i^\ctc : X\to Y} \qquad   
\dfrac {(f_i^\ppg : X_i\to Y)_i} 
  {\cotuple{f_j}_j \circ q_i \eqw f_i } \\ 
\dfrac{(f_i^\ppg : X_i\to Y)_i \quad f^\ctc : X\to Y \quad 
  \forall i \; f \circ q_i \eqw f_i}
  {f \eqs \cotuple{f_j}_j }  \\
\hline 
\end{array}$$
\renewcommand{\arraystretch}{1}
\caption{Decorated rules for exceptions (2)} 
\label{fig:proof-rules-two} 
\end{figure}

The next property corresponds to remark~\ref{rem:toppg}.

\begin{definition}
\label{defi:deco-exc-coprod-ppg}
For each catcher $k^\ctc:X\to Y$ there is a propagator $\toppg{k}^\ppg:X\to Y$,
unique up to strong equations, such that 
$ \toppg{k}^\ppg \eqw k^\ctc \;.$
\end{definition}

According to the previous rules, 
for each type $X$ there are two pure expressions 
$\id_X:X\to X$ and $\cotu_X:\empt\to X$. 
It is straightforward to check that they 
form a coproduct with respect to pure expressions
and strong equations: 
for each $f^\pure:X\to Y$ and $g^\pure:\empt\to Y$
there is a pure expression $\cotuple{f|g}^\pure:X\to Y$, 
unique up to strong equations, 
such that $\cotuple{f|g}\circ\id_X\eqs f$ and 
$\cotuple{f|g}\circ\cotu_X\eqs g$,
indeed such a situation implies that $g\eqs \cotu_X$
and $\cotuple{f|g} \eqs f$. 
This pure coproduct, with coprojections $\id_X$ and $\cotu_X$,
is called \emph{the coproduct $X\cong X+\empt$}. 
In addition, we assume that it satisfies the following \emph{decorated 
coproduct} property.  

\begin{definition}
\label{defi:deco-exc-coprod-ctc}
For each propagator $g^\ppg:X\to Y$ and each catcher $k^\ctc:\empt\to Y$ 
there is a catcher $\cotuple{g\,|\,k}^\ctc:X\to Y$,
unique up to strong equations, such that 
$ \cotuple{g\,|\,k}^\ctc \eqw g^\ppg $ and 
  $ \cotuple{g\,|\,k}^\ctc \circ \cotu_X^\pure \eqs k^\ctc \;. $
\end{definition}

The rules in Figure~\ref{fig:proof-rules-three} are the 
rules for the construction of $\toppg{k}$
and for the decorated coproduct $X\cong X+\empt$. 
They will be used for building the 
handling operations from the untagging operations. 

\begin{figure}[!ht]
\renewcommand{\arraystretch}{2.3}  
$$ \begin{array}{|c|} 
\hline 
\dfrac{k^\ctc:X\to Y}{\toppg{k}^\ppg:X\to Y} \qquad  
\dfrac{k^\ctc:X\to Y}{\toppg{k} \eqw k} \\
\;\dfrac{g^\ppg\!:\!X\!\to\! Y \;\; k^\ctc\!:\!\empt\!\to\! Y}
  {\cotuple{g\,|\,k}^\ctc\!:\!X\to Y}
\;\; 
\dfrac{g^\ppg\!:\!X\!\to\! Y \;\; k^\ctc\!:\!\empt\!\to\! Y}
  {\cotuple{g\,|\,k} \eqw g}
\;\;
\dfrac{g^\ppg\!:\!X\!\to\! Y \;\; k^\ctc\!:\!\empt\!\to\! Y}
  {\cotuple{g\,|\,k} \circ \cotu_X \eqs k } \; \\ 
\dfrac{g^\ppg : X\to Y \quad k^\ctc : \empt\to Y 
  \quad f^\ctc : X\to Y \quad  f \eqw g \quad 
  f \circ \cotu_X \eqs k}{f\eqs \cotuple{g\,|\,k}} \\
\hline 
\end{array}$$
\renewcommand{\arraystretch}{1}
\caption{Decorated rules for exceptions (3)} 
\label{fig:proof-rules-three} 
\end{figure}

\subsection{A decorated specification for exceptions}
\label{subsec:deco-exc}

Let $\logL$ denote the inference system provided by
the decorated rules for exceptions (Figures~\ref{fig:proof-rules-one}, 
\ref{fig:proof-rules-two} and~\ref{fig:proof-rules-three}). 
As for other inference systems, 
we may define \emph{theories} and \emph{specifications}
(or \emph{presentations of theories}) 
with respect to $\logL$.
They are called \emph{decorated specifications}
and \emph{decorated theories}, respectively. 
This approach is based on the general framework for 
\emph{diagrammatic} theories and specifications 
\cite{D03-diaspec,DD10-dialog}, 
but no knowledge of this framework is assumed in this paper. 
A decorated theory is made of 
types, expressions, equations and coproducts which satisfy 
the decorated rules for exceptions. 
In this Section we define a decorated specification $\Ss_{\ex}$,
which may be used for generating a decorated theory by applying
the decorated inference rules for exceptions. 
 
\begin{definition}
\label{defi:deco-spec-exc} 
Let $\Ss_{\pu}$ be some fixed equational specification
(as in Section~\ref{subsec:exc-syntax} for simplicity it is assumed that 
$\Ss_{\pu}$ has no $n$-ary operation with $n>1$). 
The \emph{decorated specification for exceptions} $\Ss_{\ex}$ 
is made of the equational specification $\Ss_{\pu}$
where each operation is decorated as pure and each equation as strong
(or weak, since both coincide here),
a type $\empt$ called the \emph{empty} type 
and for each $i$ in some set $I$  
a type $P_i$ (of \emph{parameters}) in $\Ss_{\pu}$, 
a propagator $t_i^{\ppg}:P_i\to\empt$, 
a catcher $c_i^{\ctc}: \empt \to P_i$  
and the weak equations:  
$  
   c_i \circ t_i \eqw \id : P_i \to P_i $
and 
$  
   c_i \circ t_j \eqw \cotu \circ t_j : P_j \to P_i  \;
     \mbox{ for every } j \in I,\; j\ne i $. 
\end{definition}

\begin{definition}
\label{defi:deco-raise} 
For each $i$ in $I$ and each type $Y$ in $\Ss_{\pu}$ 
the \emph{raising} propagator 
 $$ (\throw{i}{Y})^\ppg :P_i\to Y $$ 
is defined as 
  $$ \throw{i}{Y} = \cotu_Y \circ t_i : P_i\to Y \;.$$ 
\end{definition}

According to remark~\ref{rem:ppg}, the handling operation 
$\try{f}{\mathit{catch}\{\dots\} }$ 
is a propagator, not a catcher: indeed, it may recover from 
exceptions which are raised by $f$, 
but it must propagate exceptions which are raised before 
$\try{f}{\mathit{catch}\{\dots\}}$ is called. 

\begin{definition}
\label{defi:deco-handle} 
For each propagator $f^\ppg:X\to Y$, 
each non-empty list of indices $(i_1,\dots,i_n)$ 
and each propagators $g_1^\ppg:P_{i_1}\to Y,...,g_n^\ppg:P_{i_n}\to Y$, 
the \emph{handling} propagator 
  $$ (\try{f}{\catchn{i_1}{g_1}{i_n}{g_n}})^\ppg  : X \to Y $$ 
is defined as follows.
\begin{description}
\item[(A-B)] The propagator 
$\try{f}{\catchn{i_1}{g_1}{i_n}{g_n}}:X\to Y$ is 
defined from a catcher $H:X\to Y$ by:
$$ (\try{f}{\catchn{i_1}{g_1}{i_n}{g_n}})^\ppg \;=\; 
  (\toppg{ H})^\ppg : X \to Y$$
\begin{description}
\item[(1-2)] The catcher $H:X\to Y$ is defined 
from the propagator $f:X\to Y$ and from a catcher 
$k_1 = \catchn{i_1}{g_1}{i_n}{g_n}: \empt \to Y $ by:  
$$  H^\ctc \;=\; 
    \cotuple{ \id_Y^\pure \;|\; k_1^\ctc}^\ctc 
      \circ f^\ppg : 
    X\to Y $$
  $$ \xymatrix@C=3pc@R=1pc{
  & Y \ar[d]_{\id} \ar[rrrrd]^{\id} &&&& \\
  X \ar[r]^{f} & 
    Y \ar[rrrr]^(.4){h} &
    \ar@{}[ul]|(.4){\eqw} \ar@{}[dl]|(.4){\eqs} &&& Y \\ 
  & \empt \ar[u]^{\cotu} \ar[rrrru]_{k_1} &&&& \\
  } $$
\begin{description}
\item[(a-b)] The catcher $k_1: \empt \to Y$ 
is obtained by setting $p=1$ in the family of catchers 
$k_p = \catchn{i_p}{g_p}{i_n}{g_n} : \empt \to Y$ 
(for $p=1,\dots,n$) which are defined recursively by:   
 $$   k_p \;=\; \cotuple{g_p \;|\; k_{p+1}} \circ c_{i_p}
       \mbox{ for each } p=1,\dots,n 
			  \; \mbox{ and } \;  k_{n+1} = \cotu_Y   $$
 $$ \xymatrix@C=3pc@R=1pc{
  & P_{i_p} \ar[d]_{\id} \ar[rrrrd]^{g_p} &&&& \\
  \empt \ar[r]^(.4){c_{i_p}} & 
    P_{i_p}
      \ar[rrrr]^(.4){\cotuple{g_p|k_{p+1}}} &
    \ar@{}[ul]|(.4){\eqw} \ar@{}[dl]|(.4){\eqs} &&& Y \\ 
  & \empt \ar[u]^{\cotu} \ar[rrrru]_{k_{p+1}} &&&& \\
  }  $$
\end{description}
\end{description}
\end{description}
\end{definition}

It will be proved in Lemma~\ref{lem:coprod-cotu} 
that since $k_{n+1} = \cotu_Y$ we have $\cotuple{g_n | k_{n+1}} \eqs g_n$. 
It follows that when $n=1$ and 2 we get respectively: 
\begin{gather}
\label{eq:handle-deco-one}
\try{f}{\catch{i}{g}} \;\eqs\;  
   \bigtoppg{ 
    \left(\; \bigcotuple{ \id_Y \;|\; g \circ c_i } 
      \circ f \;\right) } \\
\label{eq:handle-deco-two}
\try{f}{\catchij{i}{g}{j}{h}} \;\eqs\;  
   \bigtoppg{
    \left(\;  
    \bigcotuple{\id \;|\; 
      \cotuple{g \;|\; 
        h \circ c_j} 
      \circ c_i} 
    \circ f\;\right) } 
\end{gather}

\subsection{Decorated models}
\label{subsec:deco-model}

Let $Exc$ be a set and $P_i$ (for $i\in I$) a family of sets
with injections $t_i:P_i\to \Exc$, such that 
$\Exc$ is the disjoint union of the images $E_i=t_i(P_i)$. 
Then we may define a decorated theory $\Tt_{\ex}$ as follows. 
A type is a set,
a pure expression $f^\pure:X\to Y$ is a function $f:X\to Y$,
a propagator $f^\ppg:X\to Y$ is a function $f:X\to Y+\Exc$ and 
a catcher $f^\ctc:X\to Y$ is a function $f:X+\Exc\to Y+\Exc$.  
For instance, each injection $t_i:P_i\to \Exc$ is a 
propagator $t_i^\ppg:P_i\to \emptyset$.
The conversion from pure expressions to propagators
is the construction of $f_1=\inn_Y\circ f_0:X\to Y+\Exc$
from $f_0:X\to Y$
and the conversion from propagators to catchers 
is the construction of $f_2=\cotuple{f\,|\,\ina_Y} :X+\Exc \to Y+\Exc$ 
from  $f_1:X\to Y+\Exc$. 
Composition of two expressions can be defined by
converting them to catchers and using the composition of functions
$f:X+\Exc \to Y+\Exc$ and $g:Y+\Exc \to Z+\Exc$ as 
$g\circ f:X+\Exc \to Z+\Exc$. 
When restricted to propagators this is compatible with the  
Kleisli composition with respect to the monad $X+\Exc$. 
When restricted to pure expressions this is compatible with the 
composition of functions $f_0:X \to Y$ and $g_0:Y \to Z$ as 
$g_0\circ f_0:X \to Z$. 
A strong equation $f^\ctc\eqs g^\ctc:X\to Y$ is an equality 
($\forall x\in X+\Exc, \; f(x)=g(x)$), and 
a weak equation $f\eqw g:X\to Y$ is an equality 
($\forall x\in X,\; f(x)=g(x)$); when restricted to 
propagators both notions coincide. 
The empty set $\emptyset$ is a decorated initial type 
and the family of propagators $(t_i^\ppg:P_i\to \emptyset)$ 
is a constitutive coproduct, because the 
family of functions $(t_i:P_i \to \Exc)$ is a coproduct 
in the category of sets. 

The \emph{models} of a decorated specification $\Ss$ 
with values in a decorated theory $\Tt$ are defined as kinds of 
morphisms from $\Ss$ to $\Tt$ in \cite{DD10-dialog}: 
a model maps each feature (type, pure expression, propagator, 
catcher, decorated initial type, constitutive coproduct, \dots) of $\Ss$ 
to a feature of the same kind in $\Tt$. 
When $\Tt$ is the theory $\Tt_{\ex}$ we recover 
the meaning of decorations as given informally 
in Section~\ref{subsec:deco-deco}. 
When in addition $\Ss$ is the specification $\Ss_{\ex}$ 
we get the following result.

\begin{thm}
\label{thm:deco-model} 
The decorated specification for exceptions $\Ss_{\ex}$ 
is sound with respect to the denotational semantics of 
Section~\ref{sec:exc},
in the sense that by mapping every feature in $\Ss_{\ex}$ 
to the feature with the same name in the decorated theory $\Tt_{\ex}$ 
we get a model of $\Ss_{\ex}$ with values in $\Tt_{\ex}$. 
\end{thm}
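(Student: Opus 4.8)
The plan is to verify, feature by feature, that the assignment sending each syntactic construct of $\Ss_{\ex}$ to the homonymous construct of $\Tt_{\ex}$ respects every rule listed in Figures~\ref{fig:proof-rules-one}, \ref{fig:proof-rules-two} and~\ref{fig:proof-rules-three}, and that the defining equations of $\Ss_{\ex}$ (Definition~\ref{defi:deco-spec-exc}) hold in $\Tt_{\ex}$ under this assignment. Since a model in the sense of \cite{DD10-dialog} is exactly a morphism preserving each kind of feature, establishing these two things is precisely what is required. Most of the work has in fact already been done in the construction of $\Tt_{\ex}$ in Section~\ref{subsec:deco-model}: it is there observed that $\emptyset$ is a decorated initial type, that $(t_i^\ppg:P_i\to\emptyset)$ is a constitutive coproduct, that composition of catchers is ordinary function composition and restricts correctly to propagators (Kleisli composition for $-+\Exc$) and to pure expressions, and that strong and weak equations are interpreted as equality on $X+\Exc$ and on $X$ respectively. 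So the first step is simply to collect these observations and note that they discharge the rules of Figure~\ref{fig:proof-rules-one} (identity, composition, associativity, the conversions $\pure\leadsto\ppg\leadsto\ctc$, reflexivity/symmetry/transitivity of $\eqs$ and $\eqw$, the passage $\eqs\To\eqw$, and the collapse $\eqw\To\eqs$ on propagators) and those of Figure~\ref{fig:proof-rules-two} (the three initial-type rules and the three constitutive-coproduct rules).

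Next I would treat the two congruence rules for weak equations, which are the only place where the exceptions side genuinely differs from the states side: the substitution rule $g_1\eqw g_2,\ f^\pure \;\vdash\; g_1\circ f\eqw g_2\circ f$ and the unrestricted replacement rule $f_1\eqw f_2,\ g \;\vdash\; g\circ f_1\eqw g\circ f_2$. For the first, since $f^\pure$ is interpreted as $\inn_Y\circ f_0$, its image lands in ordinary values, so precomposing with it only probes $g_1,g_2$ on $\inn_Y(Y)$, where they agree by hypothesis; one checks this is still true after the catcher extension $\cotuple{-\,|\,\ina}$. For the second, $g$ as a catcher is $\cotuple{g'\,|\,\ina}$ for a propagator $g'$, hence propagates exceptions, so $(g\circ f_1)(x)=(g\circ f_2)(x)$ for ordinary $x$ reduces to applying $g'$ to the (possibly exceptional) values $f_1(x)=f_2(x)$ — equal by hypothesis — giving a weak equation. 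Then I would handle Figure~\ref{fig:proof-rules-three}: the propagator $\toppg{k}$ is interpreted as $\inn_Y\circ(\text{restriction of }k\text{ to }\inn_X(X)\text{, precomposed with }\inn_X)$ made to propagate, i.e. the unique propagator agreeing with $k$ on ordinary values, so $\toppg{k}\eqw k$ and its uniqueness up to $\eqs$ are immediate; the coproduct $X\cong X+\empt$ rules are verified by unwinding $\cotuple{g\,|\,k}$ as the catcher agreeing with $g$ on $\inn_X(X)$ and with $k$ on the (empty, hence vacuous) $\empt$-part together with the $\Exc$-part, observing that $\cotu_X:\emptyset\to X$ is interpreted as the empty function so the condition $f\circ\cotu_X\eqs k$ is vacuous and $f\eqs\cotuple{g\,|\,k}$ follows from $f\eqw g$ on propagators.

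Finally I would check the specification-specific data: that $t_i$ interpreted as the injection $t_i:P_i\to\Exc$ viewed as a propagator $t_i^\ppg:P_i\to\emptyset$ and $c_i$ interpreted as the untagging function $c_i:\Exc\to P_i+\Exc$ viewed as a catcher $c_i^\ctc:\emptyset\to P_i$ do satisfy the two weak equations of Definition~\ref{defi:deco-spec-exc} — namely $c_i\circ t_i\eqw\id_{P_i}$, which holds on ordinary values because $c_i(t_i(a))=a$ for all $a\in P_i$ by Definition~\ref{defi:exc-untag}, and $c_i\circ t_j\eqw\cotu_{P_i}\circ t_j$ for $j\ne i$, which on ordinary values $b\in P_j$ reads $c_i(t_j(b))=t_j(b)$, again exactly Definition~\ref{defi:exc-untag} (both sides being exceptional values, the weak equation imposes nothing more). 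One must also note $P_i\in\Ss_{\pu}$ maps to the set $P_i$ and the pure sub-specification $\Ss_{\pu}$ maps into $\Tt_{\ex}$ via the standard set-theoretic model of its equational theory, which is assumed to exist. I do not expect any real obstacle: the only subtlety is keeping straight, at each rule, \emph{which} of the three interpretations (function $X\to Y$, function $X\to Y+\Exc$, function $X+\Exc\to Y+\Exc$) is in play and checking that the conversions between them commute with composition and with the coproduct constructions — the mild friction being the two asymmetric congruence rules discussed above, where one must confirm that the restriction to pure $f$ (resp. the absence of such a restriction on $g$) is exactly what makes the weak equation survive. All remaining verifications are routine diagram chases on disjoint unions.
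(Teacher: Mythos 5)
There is a genuine gap: your proposal never verifies the part of the statement that the paper's proof is actually about, namely that the \emph{derived} raising and handling operations of Definitions~\ref{defi:deco-raise} and~\ref{defi:deco-handle}, once interpreted in $\Tt_{\ex}$, coincide with the denotational functions of Definitions~\ref{defi:exc-raise} and~\ref{defi:exc-handle}. The theorem claims soundness ``with respect to the denotational semantics of Section~\ref{sec:exc}'', and that semantics is primarily the semantics of $\throw{i}{Y}$ and of $\try{f}{\catchn{i_1}{g_1}{i_n}{g_n}}$; the paper's proof is essentially a step-by-step comparison of the decorated construction $\toppg{(\cotuple{\id_Y|k_1}\circ f)}$ with $k_p=\cotuple{g_p|k_{p+1}}\circ c_{i_p}$ against the set-theoretic construction $\cotuple{\inn_Y|k_1}\circ f$ with $k_p=\cotuple{g_p|k_{p+1}}\circ c_{i_p}$ on $\Exc$, plus the observation that the outer $\toppg{(-)}$ accounts for propagation of incoming exceptions. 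You check only the primitive data of $\Ss_{\ex}$ ($t_i$, $c_i$ and the two weak equations — that part is correct and matches the paper's ``clear from the notations'') and the well-formedness of $\Tt_{\ex}$ as a decorated theory, which the paper treats as part of the construction in Section~\ref{subsec:deco-model} rather than of this proof. Without the comparison of the two handling constructions the soundness claim is not established.

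A second, more local error: in your treatment of the rules for the decorated coproduct $X\cong X+\empt$ you assert that, since $\cotu_X$ is the empty function, the condition $f\circ\cotu_X\eqs k$ is vacuous and $f\eqs\cotuple{g\,|\,k}$ already follows from $f\eqw g$. This is false in $\Tt_{\ex}$: a catcher $k^\ctc:\empt\to Y$ is a function $\emptyset+\Exc=\Exc\to Y+\Exc$, and the strong equation $f\circ\cotu_X\eqs k$ is an equality of functions on all of $\Exc$ — it is exactly what pins down $f$ on exceptional values. If it were vacuous, any two catchers weakly equal to $g$ would be strongly equal, contradicting the very distinction between $\eqw$ and $\eqs$ for catchers. (Your justification of the replacement rule also wrongly assumes every catcher $g$ has the form $\cotuple{g'\,|\,\ina}$, though there the conclusion survives because $f_1(x)=f_2(x)$ in $Y+\Exc$ already gives $g(f_1(x))=g(f_2(x))$ for any function $g$ on $Y+\Exc$.)
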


\begin{proof} 
For the tagging and untagging operations this is clear from the notations.
Then for the raising operations the result is obvious.
For the handling operations the result comes from  
a comparison of the steps (1-2) and (a-b) 
in Definitions~\ref{defi:deco-handle} and~\ref{defi:exc-handle},
while step (A-B) in Definition~\ref{defi:deco-handle} 
corresponds to the propagation of exceptions by the handling functions,
as in remark~\ref{rem:cases}. 
\end{proof} 

\section{Proofs involving exceptions}
\label{sec:proof}

As for proofs on states in \cite{DDFR12b-state}, 
we may consider two kinds of proofs on exceptions: 
the \emph{explicit} proofs involve a type of exceptions,
while the \emph{decorated} proofs do not mention any type of exceptions 
but require the specification to be decorated,
in the sense of Section~\ref{sec:deco}.
In addition, there is a simple procedure for deriving 
an explicit proof from a decorated one. 
In this Section we give some decorated proofs for exceptions, 
using the inference rules of Section~\ref{subsec:deco-deco}. 
Since the properties of the core tagging and untagging 
operations 
are dual to the properties of the looking up and updating operations 
we may reuse the decorated proofs involving states 
from \cite{DDFR12b-state}. 
Starting from any one of the seven equations for states
in \cite{PP02} we can dualize this equation 
and derive a property about raising and handling 
exceptions. This is done in this Section for two  
of these equations.

On states, the \emph{annihilation lookup-update} property  
means that updating any location with the content of this location
does not modify the state. 
A decorated proof of this property is given in \cite{DDFR12b-state}.
By duality we get the following \emph{annihilation untag-tag} property, 
which means that tagging just after untagging, 
both with respect to the same exceptional type,   
returns the given exception. 

\begin{lem}[Annihilation untag-tag]
\label{lem:ci-ti} 
For each $i\in I$:
$ 
 t_i^\ppg \circ c_i^\ctc \eqs \id_\empt^\pure \;.
$ 
\end{lem}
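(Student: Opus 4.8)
The plan is to prove the strong equation $t_i \circ c_i \eqs \id_\empt$ by using the decorated coproduct rule from Figure~\ref{fig:proof-rules-three} together with the defining weak equations of $c_i$ from Definition~\ref{defi:deco-spec-exc}. The key observation is that $\empt$ is a decorated initial type, so $c_i : \empt \to P_i$ composed with the canonical map, or rather the source $\empt$, lets us apply the uniqueness part of the $X \cong X + \empt$ coproduct property. First I would recall that $\id_\empt^\pure : \empt \to \empt$ is pure, hence a propagator, hence a catcher, and similarly $t_i \circ c_i : \empt \to \empt$ is a composite of a catcher followed by a propagator, hence a catcher. So both sides are catchers of type $\empt \to \empt$, and a strong equation between them is what we must establish.

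**Using the duality / the core equations.** The heart of the argument is dual to the annihilation lookup-update proof in \cite{DDFR12b-state}. From Definition~\ref{defi:deco-spec-exc} we have the weak equation $c_i \circ t_i \eqw \id_{P_i}$. The strategy is to precompose the desired equation with $t_i$ and with $\cotu_\empt$ (the coprojections exhibiting $\empt$, or rather using that $\empt$ is initial so any two maps out of it agree weakly, and that $t_i : P_i \to \empt$ together with appropriate structure behaves like a constitutive coprojection). Concretely: I would show $(t_i \circ c_i) \circ t_i \eqw t_i$ by rewriting $(t_i \circ c_i)\circ t_i = t_i \circ (c_i \circ t_i) \eqw t_i \circ \id = t_i$, where the middle weak equation uses the substitution congruence rule for weak equations — but here one must be careful, since that rule requires the expression being substituted to be \emph{pure}. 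The safer route is the replacement rule $\dfrac{g_1 \eqw g_2 \quad f}{f\circ g_1 \eqw f \circ g_2}$ applied with $f = t_i$ (a propagator, which is allowed since replacement is unrestricted), giving $t_i \circ (c_i \circ t_i) \eqw t_i \circ \id_{P_i} = t_i$.

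**Assembling via the constitutive coproduct.** Since $(t_i^\ppg : P_i \to \empt)_{i}$ is a constitutive coproduct (Figure~\ref{fig:proof-rules-two}: in the model $\Tt_{\ex}$ this is the coproduct $\Exc = \sum_i E_i$), its uniqueness clause says: given propagators, a catcher $f : \empt \to Y$ with $f \circ t_j \eqw (\text{the } j\text{-th propagator})$ for all $j$ is strongly equal to the cotuple. I would take the candidate catcher $f = t_i \circ c_i : \empt \to \empt$ and the target family to be $(t_j : P_j \to \empt)_j$ itself; the cotuple of this family is $\id_\empt$ (by uniqueness, since $\id_\empt \circ t_j \eqw t_j$ trivially). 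Then it remains to check $f \circ t_j \eqw t_j$ for \emph{every} $j \in I$. For $j = i$ this is the computation above. For $j \ne i$, I would use $c_i \circ t_j \eqw \cotu_{P_i} \circ t_j$ (the second family of weak equations in Definition~\ref{defi:deco-spec-exc}), so $t_i \circ c_i \circ t_j \eqw t_i \circ \cotu_{P_i} \circ t_j$; but $t_i \circ \cotu_{P_i} : \empt \to \empt$ is a propagator out of $\empt$, hence strongly (a fortiori weakly) equal to $\cotu_\empt$, and $\cotu_\empt \circ t_j \eqw \cotu_\empt \eqw \id_\empt \circ$ ... — more directly, any propagator $\empt \to \empt$ equals $\id_\empt$ up to strong equation since $\empt$ is a decorated initial type, so $t_i \circ \cotu_{P_i} \circ t_j \eqw t_j$. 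Hence $f \circ t_j \eqw t_j$ for all $j$, and the uniqueness clause of the constitutive coproduct yields $t_i \circ c_i \eqs \id_\empt$.

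**Main obstacle.** The delicate point is the bookkeeping of decorations: weak versus strong equations and which congruence rules (substitution, restricted to pure $f$; replacement, unrestricted) may legitimately be invoked at each rewrite. In particular one cannot naively substitute $c_i \circ t_i \eqw \id$ \emph{inside} an arbitrary context; one must route every such step either through the replacement rule (post-composition, always allowed) or verify purity. The whole argument also hinges on correctly identifying $\id_\empt$ as the cotuple $\cotuple{t_j}_j$ of the constitutive coproduct and on the fact that propagators out of the initial type $\empt$ are strongly unique — both of which are spelled out in Figure~\ref{fig:proof-rules-two} and the remark following Definition~\ref{defi:initial}. Once those are in place, the proof is just the dual transcription of the states proof from \cite{DDFR12b-state}.
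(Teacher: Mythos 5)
Your proof is correct and is precisely the argument the paper intends: the paper gives no explicit proof of Lemma~\ref{lem:ci-ti} but obtains it by dualizing the decorated proof of annihilation lookup-update from \cite{DDFR12b-state}, and that dual argument is exactly the constitutive-coproduct computation you carry out (establish $t_i\circ c_i\circ t_j\eqw t_j$ for every $j$, using the replacement rule for $j=i$ and the second family of weak equations plus initiality of $\empt$ for $j\ne i$, then invoke the uniqueness clause to identify $t_i\circ c_i$ with $\cotuple{t_j}_j\eqs\id_\empt$). The only slight noise is your opening paragraph's gesture toward the $X\cong X+\empt$ coproduct of Figure~\ref{fig:proof-rules-three}, which is not the relevant structure; what does the work is the constitutive coproduct $(t_j^\ppg:P_j\to\empt)_j$ of Figure~\ref{fig:proof-rules-two}, and your third paragraph uses it correctly.
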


Lemma~\ref{lem:ci-ti} is used in Proposition~\ref{prop:hi-ri} 
for proving the \emph{annihilation catch-raise} property:
catching an exception by re-raising it is like doing nothing. 
First, let us prove Lemma~\ref{lem:coprod-cotu}, 
which has been used for getting Equation~(\ref{eq:handle-deco-one}).

\begin{lem}
\label{lem:coprod-cotu} 
For each propagator $g^\ppg:X\to Y$ we have 
$\cotuple{g\,|\,\cotu_Y}^\ctc \eqs g^\ppg$. 
\end{lem}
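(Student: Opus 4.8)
The plan is to invoke the uniqueness clause of the decorated coproduct $X\cong X+\empt$ (Definition~\ref{defi:deco-exc-coprod-ctc}, equivalently the last rule of Figure~\ref{fig:proof-rules-three}), applied to the candidate expression $f:=g$ regarded as a catcher. Instantiating that rule with the propagator $g^\ppg:X\to Y$ and with the pure expression $\cotu_Y^\pure:\empt\to Y$ playing the role of $k$, the goal $g\eqs\cotuple{g\,|\,\cotu_Y}$ reduces to checking two hypotheses: the weak equation $g\eqw g$ and the strong equation $g\circ\cotu_X\eqs\cotu_Y$. Once these are in hand the rule yields $g\eqs\cotuple{g\,|\,\cotu_Y}$, and symmetry of $\eqs$ gives the stated form $\cotuple{g\,|\,\cotu_Y}^\ctc\eqs g^\ppg$.

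The first hypothesis $g\eqw g$ is just reflexivity of $\eqw$. For the second, note that $\cotu_X:\empt\to X$ is pure, hence a propagator, so $g\circ\cotu_X$ is a composite of propagators and is therefore itself a propagator $\empt\to Y$ (composition rule for propagators, Figure~\ref{fig:proof-rules-one}). Since $\empt$ is a decorated initial type, every expression from $\empt$ to $Y$ is weakly equivalent to $\cotu_Y$, so $g\circ\cotu_X\eqw\cotu_Y$; and because both $g\circ\cotu_X$ and $\cotu_Y$ are propagators, the rule of Figure~\ref{fig:proof-rules-one} that turns a weak equation between propagators into a strong one upgrades this to $g\circ\cotu_X\eqs\cotu_Y$, which is exactly the second hypothesis.

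No genuine difficulty arises; the argument is a short chain of rule applications. The one point to watch is that the defining rules for $\cotuple{g\,|\,\cotu_Y}$ in Figure~\ref{fig:proof-rules-three} by themselves only give the \emph{weak} equation $\cotuple{g\,|\,\cotu_Y}\eqw g$ (together with $\cotuple{g\,|\,\cotu_Y}\circ\cotu_X\eqs\cotu_Y$), and this weak equation cannot be promoted to a strong one directly, since $\cotuple{g\,|\,\cotu_Y}$ is a priori only a catcher, not a propagator. Routing instead through the uniqueness rule, fed with the auxiliary strong equation $g\circ\cotu_X\eqs\cotu_Y$ coming from initiality of $\empt$, is precisely what makes the strong conclusion go through.
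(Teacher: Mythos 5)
Your proof is correct and follows essentially the same route as the paper's: both invoke the characterization of $\cotuple{g\,|\,\cotu_Y}$ up to strong equations, discharge the weak hypothesis $g\eqw g$ by reflexivity, and obtain $g\circ\cotu_X\eqs\cotu_Y$ from the initiality of $\empt$ together with the rule upgrading weak equations between propagators to strong ones. Your explicit remark that $g\circ\cotu_X$ is a propagator (as a composite of propagators) just makes precise a step the paper leaves implicit.
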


\begin{proof}
Since $\cotuple{g|\cotu_Y}$ is characterized up to strong equations 
by $\cotuple{g|\cotu_Y}\eqw g$ and 
$\cotuple{g|\cotu_Y}\circ \cotu_X \eqs \cotu_Y $,
we have to prove that $g\eqw g$ and $g\circ \cotu_X \eqs \cotu_Y $. 
The weak equation is due to the reflexivity of $\eqw$. 
The unicity of $\cotu_Y$ up to weak equations 
implies that $g\circ \cotu_X \eqw \cotu_Y$,
and since both members are propagators 
we get $g\circ \cotu_X \eqs \cotu_Y$. 
\end{proof}

\begin{prop}[Annihilation catch-raise]
\label{prop:hi-ri} 
For each propagator $f^\ppg:X\to Y$ and each $i\in I$:  
$ 
  \try{f}{\catch{i}{\throw{i}{Y}}} \eqs f   \;.
$ 
\end{prop}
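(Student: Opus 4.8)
The plan is to reduce the statement to Lemma~\ref{lem:ci-ti} by unwinding the definition of the handling operation in the case $n=1$. By Equation~(\ref{eq:handle-deco-one}), which specializes Definition~\ref{defi:deco-handle} to a single catch clause, we have
$$
\try{f}{\catch{i}{\throw{i}{Y}}} \;\eqs\;
  \bigtoppg{\bigl(\; \bigcotuple{\id_Y \;|\; \throw{i}{Y}\circ c_i}\circ f \;\bigr)}\;.
$$
So the first step is to replace $\throw{i}{Y}$ by its definition $\throw{i}{Y}=\cotu_Y\circ t_i$ (Definition~\ref{defi:deco-raise}), giving the inner catcher $\cotuple{\id_Y \mid \cotu_Y\circ t_i\circ c_i}$.

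The key observation is that $t_i\circ c_i$ is, by Lemma~\ref{lem:ci-ti}, strongly equal to $\id_\empt$. Hence the second clause $\cotu_Y\circ t_i\circ c_i \eqs \cotu_Y\circ\id_\empt \eqs \cotu_Y$, using the congruence (substitution) rule for strong equations from Figure~\ref{fig:proof-rules-one} together with the unit law $\cotu_Y\circ\id_\empt\eqs\cotu_Y$. Substituting this into the coproduct, the inner catcher becomes $\cotuple{\id_Y \mid \cotu_Y}$. Now I invoke Lemma~\ref{lem:coprod-cotu} with $g=\id_Y$: since $\id_Y$ is pure (hence a propagator), $\cotuple{\id_Y\mid\cotu_Y}^\ctc \eqs \id_Y^\pure$. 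One subtlety to check here is that replacing the argument of $\cotuple{-\mid-}$ by a strongly equal expression yields a strongly equal result; this follows because $\cotuple{g\mid k}$ is characterized \emph{uniquely up to strong equations} by the properties in Definition~\ref{defi:deco-exc-coprod-ctc}, and those properties are preserved under strong equality of $k$. So after these reductions the inner composite is $\id_Y\circ f \eqs f$ by the unit law.

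It remains to propagate the outer $\bigtoppg{(-)}$ operator through this chain of strong equalities and conclude. The hard point here is that $\toppg{-}$ is not a priori a congruence for strong equations in the rules as stated — Figure~\ref{fig:proof-rules-three} only gives $\toppg{k}\eqw k$ and a uniqueness clause. So instead of congruence I would argue directly: from $\cotuple{\id_Y\mid\throw{i}{Y}\circ c_i}\circ f \eqs f$ (established above) and the weak equation $\toppg{(\cdots)}\eqw \cdots \eqs f$ given by the defining rule for $\toppg{-}$, I get $\toppg{(\cdots)}\eqw f$; since $\toppg{(\cdots)}$ is a propagator and $f$ is a propagator, the rule $\dfrac{f^\ppg\eqw g^\ppg}{f\eqs g}$ from Figure~\ref{fig:proof-rules-one} upgrades this to $\toppg{(\cdots)}\eqs f$. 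Chaining with Equation~(\ref{eq:handle-deco-one}) and transitivity of $\eqs$ gives $\try{f}{\catch{i}{\throw{i}{Y}}}\eqs f$, as required. The main obstacle, then, is not any calculation but the bookkeeping of decorations: making sure each rewrite under $\cotuple{-\mid-}$ and past $\toppg{-}$ is justified by the appropriate decorated rule (uniqueness up to $\eqs$, or the weak-to-strong upgrade for propagators) rather than by an unavailable congruence rule.
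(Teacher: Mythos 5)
Your proof is correct and follows essentially the same route as the paper's: specialize to Equation~(\ref{eq:handle-deco-one}), reduce the inner catcher to $\cotuple{\id_Y\,|\,\cotu_Y}\eqs\id_Y$ via Lemma~\ref{lem:ci-ti} and the coproduct uniqueness, then discharge $\toppg{(-)}$ by the weak-to-strong upgrade for propagators. Your extra care in routing the final step through weak equations (rather than a strong congruence for $\toppg{(-)}$) is a minor, sound refinement of the same argument.
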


\begin{proof} 
By Equation~(\ref{eq:handle-deco-one}) and 
Definition~\ref{defi:deco-raise} we have 
$ \try{f}{\catch{i}{\throw{i}{Y}}} \eqs  
   \toppg{ (\cotuple{ \id_Y | \cotu_Y \circ t_i \circ c_i } \circ f ) } $.
By Lemma~\ref{lem:ci-ti} 
$ \cotuple{ \id_Y | \cotu_Y \circ t_i \circ c_i } \eqs  
\cotuple{ \id_Y | \cotu_Y } $,
and the unicity property of $\cotuple{ \id_Y | \cotu_Y }$
implies that $\cotuple{ \id_Y | \cotu_Y } \eqs \id_Y $. 
Thus $ \try{f}{\catch{i}{\throw{i}{Y}}} \eqs \toppg{f}$.
Finally, since $\toppg{f} \eqw f$ and $f$ is a propagator 
we get $\toppg{f}\eqs f$.
\end{proof} 

On states, the \emph{commutation update-update} property  
means that updating two different locations can be done in any order. 
By duality we get the following \emph{commutation untag-untag} property, 
which means that untagging with respect to two distinct 
exceptional types can be done in any order. 
A detailed decorated proof of the 
commutation update-update property
is given in \cite{DDFR12b-state}.
The statement of this property and its proof 
use \emph{semi-pure products}, 
which were introduced in 
\cite{DDR11-seqprod} in order to provide a decorated alternative 
to the strength of a monad. 
Dually, the commutation untag-untag property 
use \emph{semi-pure coproducts},
which generalize the decorated coproducts $X\cong X+\empt$ 
from Definition~\ref{defi:deco-exc-coprod-ctc}. 
The \emph{coproduct} of two types $A$ and $B$ 
is defined as a type $A+B$ with two pure coprojections
$q_1^\pure:A \to A+B$ and $q_2^\pure:B \to A+B$, 
which satisfy the usual categorical coproduct property 
with respect to the pure morphisms. 
Then the \emph{semi-pure coproduct} of 
a propagator $f^\ppg:A\to C$
and a catcher $k^\ctc:B\to C$ is a catcher 
$\cotuple{f|k}^{\ctc}:A+B\to C$ 
which is characterized, up to strong equations, by 
the following decorated version of the coproduct property: 
$\cotuple{f|k} \circ q_1 \eqw f $
and $\cotuple{f|k} \circ q_2 \eqs k $.
Then as usual, the coproduct 
$f'+k':A+B\to C+D$ of 
a propagator $f':A\to C$
and a catcher $k':B\to D$ is the catcher 
$f'+k' = \cotuple{q_1\circ f\;|\; q_2\circ k}:A+B\to C+D$. 
Whenever $g$ is a propagator it can be proved that 
$\toppg{\cotuple{f|g}} \eqs \cotuple{f|g}$;  
thus, up to strong equation, 
we can assume that in this case 
$\cotuple{f\;|\;g}:A+B \to C$ is a propagator; 
it is characterized, up to strong equations, by
$\cotuple{f\;|\;g} \circ q_1 \eqs f$ and 
$\cotuple{f\;|\;g} \circ q_2 \eqs g$.

\begin{lem}[Commutation untag-untag] 
\label{lem:cj-ci} 
For each $i,j\in I$ with $i\ne j$: 
$$ 
  (c_i+ \id_{P_j})^\ctc \circ c_j^\ctc \eqs 
    (\id_{P_i} + c_j)^\ctc \circ c_i^\ctc  : \empt \to P_i+P_j
$$
\end{lem}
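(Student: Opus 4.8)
The plan is to exploit the duality with states once more: this commutation statement is the exception-dual of the commutation update-update property proved in \cite{DDFR12b-state}, so in principle the proof is obtained by dualizing that one line by line. Since dualizing arrows is not something the reader can be asked to do on the spot, I would instead give a self-contained decorated proof, using the characterization of semi-pure coproducts up to strong equations together with the defining weak equations of the untagging operations from Definition~\ref{defi:deco-spec-exc}. Both sides of the claimed equation are catchers $\empt\to P_i+P_j$; by the decorated coproduct rules, a catcher into $P_i+P_j$ is determined up to strong equation by how it composes (weakly, on the propagator side; strongly, through $\cotu$) with the coprojections of the relevant constitutive coproduct of $\empt$. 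So the strategy is to reduce "$L\eqs R$" to checking that $L$ and $R$ agree after precomposition with each $t_k:P_k\to\empt$, $k\in I$.

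**Key steps.** First I would unfold the notations: $c_i+\id_{P_j}=\cotuple{q_1\circ c_i \mid q_2}$ and $\id_{P_i}+c_j=\cotuple{q_1\mid q_2\circ c_j}$, where $q_1:P_i\to P_i+P_j$, $q_2:P_j\to P_i+P_j$ are the pure coprojections, and I would recall that composing a catcher with $c_{i_p}$ and simplifying is governed by the recursion pattern already used in Definition~\ref{defi:deco-handle}. Second, I would precompose both sides with $t_i^\ppg:P_i\to\empt$. Using the weak equation $c_i\circ t_i\eqw\id_{P_i}$ and the substitution rule restricted to pure $f$'s (here one must be careful: the left factor is pure, so the substitution congruence applies), the left-hand side $(c_i+\id_{P_j})\circ c_j\circ t_i$ reduces, via $c_j\circ t_i\eqw\cotu\circ t_i$ (since $i\ne j$) and then the behaviour of $c_i+\id_{P_j}$ on $\cotu$, to $q_1$ up to weak equation; symmetrically the right-hand side reduces to $q_1$ after using $c_i\circ t_i\eqw\id$ and then $\id_{P_i}+c_j$ composed with $q_1$. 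Third, I would precompose with $t_j^\ppg:P_j\to\empt$ and run the mirror-image computation, landing on $q_2$ on both sides. Fourth, I would handle $t_k$ for $k\notin\{i,j\}$: both sides reduce, via the two "$c\circ t_k\eqw\cotu\circ t_k$" equations, to $\cotu_{P_i+P_j}\circ t_k$, again up to weak equation. Finally, since $(t_k^\ppg:P_k\to\empt)_k$ is a constitutive coproduct and all the intermediate equalities obtained are weak equalities of the form $(\text{catcher})\circ t_k\eqw(\text{propagator})$, the uniqueness clause of the constitutive-coproduct rule in Figure~\ref{fig:proof-rules-two} (together with the rule converting a cotuple-characterization into a strong equation) upgrades "$L\circ t_k\eqw R\circ t_k$ for all $k$" to "$L\eqs R$".

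**Main obstacle.** The delicate point is bookkeeping the decoration side-conditions on the congruence rules. In Figure~\ref{fig:proof-rules-one} the weak-substitution rule $g_1\eqw g_2 \vdash g_1\circ f\eqw g_2\circ f$ requires $f$ pure, whereas weak-replacement $f_1\eqw f_2\vdash g\circ f_1\eqw g\circ f_2$ holds for arbitrary $g$. In the reduction above, each rewrite "$c\circ t_k\eqw\dots$" is being composed on the \emph{left} by a catcher ($c_i+\id_{P_j}$ etc.) — which is fine, that is replacement — but the $t_k$'s on the right are being fed into catchers, so I must make sure I never need to substitute a non-pure expression on the inside of a weak equation. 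Checking this carefully — i.e. verifying that every step is either a strong-equation manipulation (where congruence is unrestricted) or a weak replacement on the left by a catcher, never a weak substitution on the right by a non-pure map — is where the real work of the proof lies; the algebra itself is the routine dual of \cite{DDFR12b-state}, and once the reduction to the three cases $t_i$, $t_j$, $t_k$ is in place, the constitutive-coproduct uniqueness rule closes it immediately.
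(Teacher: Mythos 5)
Your argument is correct, and it is genuinely more than what the paper itself supplies: the paper gives no proof of Lemma~\ref{lem:cj-ci}, deferring entirely to the duality with the commutation update-update property whose decorated proof appears in \cite{DDFR12b-state}. Your self-contained derivation is, up to that duality, exactly what the dualized proof amounts to. Since both sides are catchers out of $\empt$ and $(t_k^\ppg:P_k\to\empt)_{k\in I}$ is a constitutive coproduct, the uniqueness rule of Figure~\ref{fig:proof-rules-two} reduces the strong equation to the weak equations $L\circ t_k\eqw f_k\eqw R\circ t_k$ for a single family of propagators $f_k$, and your three cases are the right ones: $k=i$ yields $q_1$, $k=j$ yields $q_2$, and $k\notin\{i,j\}$ yields $\cotu_{P_i+P_j}\circ t_k$ (using that any propagator $\empt\to P_i+P_j$ is strongly equal to $\cotu_{P_i+P_j}$). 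Each case goes through using only the axioms $c_i\circ t_i\eqw\id$ and $c_i\circ t_j\eqw\cotu\circ t_j$, the strong clause of the semi-pure coproduct characterization, and left-replacement for weak equations; your worry about the restricted weak-substitution rule is well placed but, as you suspect, never bites, because every precomposition with a non-pure $t_k$ is applied to a strong equation. Two small points to fix in a final write-up: state explicitly that $(t_k)_k$ is assumed to form a constitutive coproduct on $\empt$ (Definition~\ref{defi:deco-spec-exc} does not list this, although the paper uses it implicitly, e.g.\ it is needed already for Lemma~\ref{lem:ci-ti}); and your opening paragraph conflates the semi-pure coproduct characterization of maps \emph{into} $P_i+P_j$ with the constitutive-coproduct characterization of catchers \emph{out of} $\empt$ --- only the latter drives the reduction, as your key steps then correctly use.
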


\begin{prop}[Commutation catch-catch] 
\label{prop:hj-hi}
For each $i,j\!\in\!I$ with $i\!\ne\! j$: 
$$ 
\try{f}{\catchij{i}{g}{j}{h}}\eqs \try{f}{\catchij{j}{h}{i}{g}} 
$$
\end{prop}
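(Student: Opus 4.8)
The plan is to reduce the \emph{Commutation catch-catch} property to the \emph{Commutation untag-untag} lemma (Lemma~\ref{lem:cj-ci}), exactly in the spirit of how Proposition~\ref{prop:hi-ri} was reduced to Lemma~\ref{lem:ci-ti}. First I would use Equation~(\ref{eq:handle-deco-two}) to rewrite both sides of the desired strong equation in terms of the core operations: the left-hand side becomes $\bigtoppg{(\bigcotuple{\id \,|\, \cotuple{g \,|\, h \circ c_j} \circ c_i} \circ f)}$ and the right-hand side becomes $\bigtoppg{(\bigcotuple{\id \,|\, \cotuple{h \,|\, g \circ c_i} \circ c_j} \circ f)}$. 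Since $\bigtoppg{(-)}$ applied to strongly equal arguments yields strongly equal propagators (by the unicity clause in Definition~\ref{defi:deco-exc-coprod-ppg} / the rules in Figure~\ref{fig:proof-rules-three}), and since $-\circ f$ is a congruence for $\eqs$, it suffices to prove the strong equation
$$ \cotuple{g \,|\, h \circ c_j} \circ c_i \;\eqs\; \cotuple{h \,|\, g \circ c_i} \circ c_j : \empt \to Y \;. $$

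The key step is then to recognize each of these composites as a "copairing composed with a semi-pure coproduct of untagging functions", so that Lemma~\ref{lem:cj-ci} applies. Concretely, I would show $\cotuple{g \,|\, h \circ c_j} \circ c_i \eqs \cotuple{g \,|\, h} \circ (\id_{P_i} + c_j) \circ c_i$, where $\cotuple{g\,|\,h} : P_i + P_j \to Y$ is the semi-pure coproduct of the propagators $g$ and $h$ (a propagator, by the remark in the text preceding Lemma~\ref{lem:cj-ci}). This identity is a routine manipulation with the coproduct $X \cong X+\empt$-style rules and the definition of $+$ on morphisms: composing with $(\id_{P_i}+c_j)$ first routes the $P_i$-component through $\id$ and the $\empt$-component through $c_j$, and then $\cotuple{g\,|\,h}$ picks out $g$ on the first summand and $h$ on the second; pasting the defining (weak/strong) equations for $+$ and for the semi-pure coproduct against the coprojections $q_1, q_2$, and invoking the unicity clauses, gives the claim. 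Symmetrically, $\cotuple{h \,|\, g \circ c_i} \circ c_j \eqs \cotuple{h \,|\, g} \circ (c_i + \id_{P_j}) \circ c_j$, and $\cotuple{h\,|\,g} \eqs \cotuple{g\,|\,h} \circ s$ where $s : P_j + P_i \to P_i + P_j$ is the pure symmetry isomorphism. Finally one checks $s \circ (c_i + \id_{P_j}) \circ c_j \eqs (\id_{P_i}+c_j)\circ c_i$: this is Lemma~\ref{lem:cj-ci} up to the pure reindexing $s$, which commutes past everything since $s$ is pure and the congruence rules are available.

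I expect the main obstacle to be the bookkeeping around the semi-pure coproducts and the decoration of the copairings: one must be careful that $\cotuple{g\,|\,h}$ is genuinely a propagator (using that both $g$ and $h$ are propagators, hence $\toppg{\cotuple{g\,|\,h}}\eqs\cotuple{g\,|\,h}$), so that composing it with the catcher $(\id_{P_i}+c_j)\circ c_i$ yields something whose strong-equation class is well-behaved, and that the weak equations against coprojections can be upgraded to strong ones exactly where both sides are propagators. The other delicate point is that Lemma~\ref{lem:cj-ci} is stated with target $P_i+P_j$ and both orders $(c_i+\id)\circ c_i$ versus $(\id+c_j)\circ c_i$ written with a fixed ordering of summands; matching this to what appears after unfolding Equation~(\ref{eq:handle-deco-two}) requires inserting the symmetry $s$ and tracking it through, which is purely formal but easy to get wrong. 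Once the reduction to Lemma~\ref{lem:cj-ci} is set up cleanly, the remainder is a direct application of the congruence and unicity rules of Figures~\ref{fig:proof-rules-one}--\ref{fig:proof-rules-three}, together with $\toppg{f}\eqs f$ for the propagator $f$ on the outermost layer.
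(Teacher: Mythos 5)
Your proposal follows essentially the same route as the paper's proof: unfold both sides via Equation~(\ref{eq:handle-deco-two}), reduce to the core strong equation $\cotuple{g\,|\,h\circ c_j}\circ c_i\eqs\cotuple{h\,|\,g\circ c_i}\circ c_j$, rewrite each side as $\cotuple{g\,|\,h}$ composed with a semi-pure coproduct of untagging maps (absorbing the swap of summands), and conclude by Lemma~\ref{lem:cj-ci}. Apart from a minor typing slip in the symmetric half (your $\cotuple{h\,|\,g}\circ(c_i+\id_{P_j})$ should first read $\cotuple{h\,|\,g}\circ(\id_{P_j}+c_i)$ before the symmetry is inserted), which you yourself flag as bookkeeping, this is the paper's argument.
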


\begin{proof} 
According to Equation~(\ref{eq:handle-deco-two}): 
$  \try{f}{\catchij{i}{g}{j}{h}} \eqs 
  \toppg{( 
    \cotuple{\id \;|\; 
      \cotuple{g \;|\; 
        h \circ c_j} 
      \circ c_i} 
    \circ f )} $
Thus, the result will follow from
$ \cotuple{g \;|\; h \circ c_j} \circ c_i \eqs
  \cotuple{h \;|\;  g \circ c_i}  \circ c_j $. 
It is easy to check that 
 $ \cotuple{g \;|\; h \circ c_j}
  \eqs \cotuple{g \;|\; h} \circ (\id_{P_i} + c_j) $, 
so that
$ \cotuple{g \;|\; h \circ c_j} \circ c_i \eqs 
   \cotuple{g \;|\; h} \circ (\id_{P_i} + c_j) \circ c_i \;.$
Similarly
$ \cotuple{h \;|\; g \circ c_i} \circ c_j \eqs 
   \cotuple{h \;|\; g} \circ (\id_{P_j} + c_i) \circ c_j $
hence 
$ \cotuple{h \;|\; g \circ c_i} \circ c_j \eqs 
   \cotuple{g \;|\; h} \circ (c_i + \id_{P_j}) \circ c_j \;.$ 
Then the result follows from Lemma~\ref{lem:cj-ci}.
\end{proof} 


\end{document}